\theoremstyle{plain}
\newtheorem{theorem}{Theorem}
\newtheorem{definition}{Definition}
\definecolor{my-red}{HTML}{C62828}
\definecolor{my-red-light}{HTML}{E57373}
\definecolor{my-red-verylight}{HTML}{FFCDD2}
\definecolor{my-red-dark}{HTML}{B71C1C}
\definecolor{my-pink}{HTML}{EC407A}
\definecolor{my-pink-light}{HTML}{F48FB1}
\definecolor{my-pink-verylight}{HTML}{F8BBD0}
\definecolor{my-pink-dark}{HTML}{880E4F}
\definecolor{my-purple}{HTML}{8E24AA}
\definecolor{my-purple-light}{HTML}{BA68C8}
\definecolor{my-purple-verylight}{HTML}{e5cefc}
\definecolor{my-purple-dark}{HTML}{6A1B9A}
\definecolor{my-indigo}{HTML}{3949AB}
\definecolor{my-indigo-light}{HTML}{7986CB}
\definecolor{my-indigo-verylight}{HTML}{9FA8DA}
\definecolor{my-indigo-dark}{HTML}{1A237E}
\definecolor{my-blue}{HTML}{1E88E5}
\definecolor{my-blue-light}{HTML}{64B5F6}
\definecolor{my-blue-verylight}{HTML}{B3E5FC}
\definecolor{my-blue-dark}{HTML}{0D47A1}
\definecolor{my-cyan}{HTML}{00BCD4}
\definecolor{my-cyan-light}{HTML}{4DD0E1}
\definecolor{my-cyan-verylight}{HTML}{80DEEA}
\definecolor{my-cyan-dark}{HTML}{0097A7}
\definecolor{my-teal}{HTML}{009688}
\definecolor{my-teal-light}{HTML}{4DB6AC}
\definecolor{my-teal-verylight}{HTML}{B2DFDB}
\definecolor{my-teal-dark}{HTML}{00695C}
\definecolor{my-green}{HTML}{39ac39}
\definecolor{my-green-light}{HTML}{8cd98c}
\definecolor{my-green-verylight}{HTML}{b3e6b3}
\definecolor{my-green-dark}{HTML}{339933}
\definecolor{my-grass}{HTML}{689F38}
\definecolor{my-grass-light}{HTML}{8BC34A}
\definecolor{my-grass-verylight}{HTML}{AED581}
\definecolor{my-grass-dark}{HTML}{33691E}
\definecolor{my-lime}{HTML}{CDDC39}
\definecolor{my-lime-light}{HTML}{DCE775}
\definecolor{my-lime-verylight}{HTML}{E6EE9C}
\definecolor{my-lime-dark}{HTML}{AFB42B}
\definecolor{my-yellow}{HTML}{fffc29}
\definecolor{my-yellow-light}{HTML}{fffd7a}
\definecolor{my-yellow-verylight}{HTML}{fefdbb}
\definecolor{my-yellow-dark}{HTML}{FFD600}
\definecolor{my-orange}{HTML}{FF8F00}
\definecolor{my-orange-light}{HTML}{FFC107}
\definecolor{my-orange-verylight}{HTML}{ffe5a4}
\definecolor{my-orange-verylight}{HTML}{ffe6bb}
\definecolor{my-orange-dark}{HTML}{FF6F00}
\definecolor{my-brown}{HTML}{6D4C41}
\definecolor{my-brown-light}{HTML}{795548}
\definecolor{my-brown-verylight}{HTML}{BCAAA4}
\definecolor{my-brown-dark}{HTML}{3E2723}
\definecolor{my-gray}{HTML}{616161}
\definecolor{my-gray-light}{HTML}{9E9E9E}
\definecolor{my-gray-verylight}{HTML}{f0f0f0}
\definecolor{my-gray-veryverylight}{HTML}{f6f6f6}
\definecolor{my-gray-dark}{HTML}{424242}
\definecolor{my-steel}{HTML}{546E7A}
\definecolor{my-steel-light}{HTML}{78909C}
\definecolor{my-steel-verylight}{HTML}{B0BEC5}
\definecolor{my-steel-dark}{HTML}{37474F}
\definecolor{ColorBlind1}{HTML}{D81B60}
\definecolor{ColorBlind2}{HTML}{1E88E5}
\definecolor{ColorBlind3}{HTML}{FFC107}
\definecolor{ColorBlind4}{HTML}{004D40}
\definecolor{MyOrange}{rgb}{1, 0.5, 0}
\definecolor{MyLightOrange}{rgb}{1, 0.9, 0.7}
\definecolor{MyGrey}{rgb}{0.3, 0.3, 0.3}
\definecolor{MyLightGrey}{rgb}{0.9, 0.9, 0.9}
\definecolor{MyGreen}{rgb}{0, 0.6, 0}
\definecolor{MyBlue}{rgb}{0, 0, 0.6}
\definecolor{MyLightBlue}{rgb}{0.7, 0.8, 1}
\definecolor{MyRed}{rgb}{0.7, 0, 0}
\definecolor{MyLightRed}{rgb}{1, 0.7, 0.7}
\definecolor{MyLightYellow}{HTML}{f9f6ec}
\renewenvironment{leftbar}[1][\hsize]
{%
	\MakeFramed{\hsize#1\advance\hsize-\width\FrameRestore}%
}
{\endMakeFramed}
\NewDocumentEnvironment{illustration}{o}{%
	\begin{leftbar}\noindent{\bfseries Illustration\IfNoValueTF{#1}{}{ \normalfont(#1)\bfseries}.}%
}
	{\end{leftbar}}
\newcommand{\tuple}[1]{\left\langle #1 \right\rangle}
\renewcommand{\phi}{\varphi}
\newcommand{\agentSet}{\mathcal{N}}
\newcommand{\projSet}{\mathcal{P}}
\newcommand{\allocSet}{\textsc{Feas}}
\newcommand{\profile}{\boldsymbol{A}}
\newcommand{\satisfaction}{\mathit{sat}}
\newcommand{\cardSatisfaction}{\mathit{sat}^{\mathit{card}}}
\newcommand{\costSatisfaction}{\mathit{sat}^{\mathit{cost}}}
\newcommand{\share}{\mathit{share}}
\author{Jan Maly\\
	DBAI, Institute of Logic and Computation, TU Wien, Vienna, Austria}
\title{The core of an approval-based PB instance can be empty for nearly all cost-based 
	satisfaction functions and for the share} 
\begin{document}
\maketitle

\abstract{
The core is a strong fairness notion in multiwinner voting 
and participatory budgeting (PB). 
It is known that the core can be empty if we consider cardinal 
utilities, but it is not known whether it is always satisfiable 
with approval-ballots. In this short note, I show that 
in approval-based PB the core can be empty for nearly all satisfaction functions
that are based on the cost of a project.
In particular, I show that the core can be empty for the cost satisfaction
function, satisfaction functions based on diminishing marginal returns
and the share. 
However, it remains open whether the core can be empty for the cardinality satisfaction 
function.
}

\section{Introduction}

Proportionality or fairness in (approval-based) multiwinner voting \citep{LaSk23}
and participatory budgeting (PB) \citep{Survey} has been one of the
most active research areas in (Computational) Social Choice in recent years.
In this time, many different fairness notions have been introduced, some which can be 
satisfied, like EJR \citep{ABCEFW17}, and some which are known to not always be satisfiable,
like laminar proportionality \citep{PeSk20}. 
There is, however, one important proportionality axiom for which it is unknown whether it 
is satisfiable or not, the core.

The core was introduced in multiwinner voting by \citet{ABCEFW17}.
Intuitively, we say that a committee is in the core if no group of voters $N$
can improve their outcome by `leaving' the election with their share of the seats or budget.
The question whether there always exists a committee that is in the core 
is considered one of the central open questions of the multiwinner voting literature \citep{LaSk23}.

In recent years, there has been a strong push to extend the theory of proportionality 
from multiwinner voting to PB (see the recent survey by \citet{Survey} for a detailed account).
PB is a generalization of multiwinner voting in
which the candidates, usually called projects in PB, can have different weights.
The core for PB was first introduce by \citet{FGM16} for divisible projects.
The definition for indivisible projects, which we are concerned with in this note, was introduced
by \citet{FMS18}, who also showed that the core can be empty when
voters are allowed to submit cardinal ballots.
\citet{PPS21} improved this result, by showing that it even holds for the unit-cost case.
However, both counter-examples rely on voters having preferences that constitute a 
type of Condorcet cycle, which cannot occur if voters only submit approval ballots.
Consequently, the question whether the core is always non-empty has remained open
for approval-based PB. 

In this note, I answer the question negatively for a large class of natural
approval-based satisfaction functions. Approval-based satisfaction functions
are, essentially, different ways of interpreting approval preferences in PB.
In the past, a hand full of different satisfaction function have been considered 
in the literature, with the cost satisfaction function and the cardinality satisfaction
function being the two most prominent one (see again the survey by \citet{Survey} 
for a full list of which papers use which satisfaction function).
Recently \citet{BFLMP23} introduced a general 
framework for reasoning about different satisfaction functions in PB, which I will
follow here. Using this framework, I will show that the core can be empty for a large and 
natural class of satisfaction functions. Crucially, this class contains the
cost satisfaction function, which is the de-facto standard in real world applications,
in particular in elections where proportional voting rules are used.
However, it does not include the cardinality satisfaction function.

This note is organized as follows: First I will introduce the setting of 
approval-based PB and define the core. Then, I will show that the core 
can be empty for the cost satisfaction function. This will both 
serve as an illustration of the proof idea and also give a concrete example
for the most important satisfaction function. Afterwards, I present a general result
that covers essentially all natural satisfaction functions for which the satisfaction
grows with the cost of a project. Finally, I show that the presented proof also 
works for a satisfaction function called the share, which is not covered by the general result.

\section{Preliminaries}

Let us first introduce the formal framework for approval-based PB. For this, we follow the survey of 
\citet{Survey}.	

A PB instance is a tuple of three elements  $I = \tuple{\projSet, c, b}$ called an \emph{instance} where $\projSet = \{p_1, \ldots, p_m\}$ is the \emph{set of projects}; $c: \projSet \rightarrow \mathbb{R}_{> 0}$ is the \emph{cost function}, associating every project $p \in \projSet$ with its cost $c(p) \in \mathbb{R}_{> 0}$; and $b \in \mathbb{R}_{> 0}$ is the \emph{budget limit}. We assume that all projects are feasible $p \in \projSet$, \textit{i.e.}, that $c(p)$ for all $p \in \projSet$. For any subset of projects $P \subseteq \projSet$, we denote by $c(P)$ its total cost $\sum_{p \in P} c(p)$. 
	An instance $I = \tuple{\projSet, c, b}$ is said to have \emph{unit costs} if for every project $p \in \projSet$, we have $c(p) = 1$ and $b \in \mathbb{N}_{> 0}$. These instances are especially interesting because they correspond to multi-winner elections \citep{LaSk23}.
	
	Let $\agentSet = \{1, \ldots, n\}$ be the \emph{set of voters} involved in the PB process. When facing an instance $I = \tuple{\projSet, c, b}$, they are asked to submit their preferences over the projects in $\projSet$. In this note we assume that they are using approval ballots. For agent $i \in \agentSet$, their approval ballot $A_i \subseteq \projSet$ is a subset of $\projSet$, where $p \in A_i$ indicates that agent $i$ approves of project $p$. 

The outcome of a PB instance $I = \tuple{\projSet, c, b}$ is a \emph{budget allocation} $\pi \subseteq \projSet$ such that $c(\pi) \leq b$. We will denote by $\allocSet(I)$ the set of all \emph{feasible} budget allocations for instance~$I$, defined as $\allocSet(I) = \{\pi \subseteq \projSet \mid c(\pi) \leq b\}$. A budget allocation $\pi \in \allocSet(I)$ is \emph{exhaustive} if there is no $p \in \projSet \setminus \pi$ such that $c(\pi \cup \{p\}) \leq b$.
	
	
	When it comes to approval ballots, there is no obvious way to define a measure of the satisfaction of a voter. \citet{BFLMP23} introduced the concept of \emph{approval-based satisfaction functions}, which are functions translating a budget allocation into a satisfaction level for the agents, given their approval ballots. Let us provide their definition.
	
	\begin{definition}[Approval-Based Satisfaction Functions]
		Given an instance $I = \tuple{\projSet, c, b}$ and a profile $\profile$, an \emph{(approval-based) satisfaction function} is a mapping $\satisfaction: 2^\projSet \rightarrow \mathbb{R}_{\geq 0}$ satisfying the following two conditions:
		\begin{itemize}
			\item $\satisfaction(P) \geq \satisfaction(P')$ for all $P, P' \subseteq \projSet$ such that $P \supseteq P'$: the satisfaction is inclusion-monotonic;
			\item $\satisfaction(P) = 0$ if and only if $P = \emptyset$: the satisfaction is zero only for the empty set.
		\end{itemize}
		
		\noindent The satisfaction of agent $i \in \agentSet$ for a budget allocation $\pi \in \allocSet(I)$ is defined as:
		\[\satisfaction_i(\pi) = \satisfaction(\{p \in \pi \mid A_i(p) = 1\}).\]
	\end{definition}
	
	\noindent Note that in contrast to the case of cardinal ballots, satisfaction functions are not generally assumed to be additive. 
	
	Several satisfaction functions have been introduced in the literature, we define them below.
	
	\begin{itemize}
		\item \textbf{Cardinality Satisfaction Function} \citep{TaFa19}: measures the satisfaction of the voters as the number of selected and approved projects:
		\[\cardSatisfaction(P) = |P|.\]
		\item \textbf{Cost Satisfaction Function} \citep{TaFa19}: measures the satisfaction of the voters as the cost of the selected and approved projects:
		\[\costSatisfaction(P) = c(P).\]
		Note that with indivisible projects, this is equivalent to the \emph{overlap satisfaction function} of \citet{GKSA19}.
		\item \textbf{Chamberlin-Courant Satisfaction Function} \citep{TaFa19}: measures the satisfaction of the voters as being 1 if at least one approved project was selected, and 0 otherwise:
		\[\satisfaction^{CC}(P) = \mathds{1}_{P \neq \emptyset}.\]
		\item \textbf{Share} \citep{LMR21}: measures the resources the decision maker used to satisfy the voters:
		\[\satisfaction^{\share}(P) = \sum_{p \in P} \frac{c(p)}{|\{i \in \agentSet \mid A_i(p) = 1\}|}.\]
		It is important to keep in mind that the share has not been introduced as a satisfaction function but can still be interpreted as one (while being cautious as to how to use it).
		\item \textbf{Square Root and Log Satisfaction Functions} \citep{BFLMP23}: measures the satisfaction of the voters as (marginally) diminishing when the cost of a project increases:
		\[\satisfaction^{\sum\ln}(P) = \sum_{p \in P} \ln(1 + c(p))  \qquad\qquad  \satisfaction^{\sum\sqrt{~}}(P) = \sum_{p \in P} \sqrt{c(p)}.\]
		Note that we could also consider satisfaction functions that implement global marginal diminishing satisfaction:
		\[\satisfaction^{\ln}(P) = \ln(1 + c(P))  \qquad\qquad  \satisfaction^{\sqrt{~}}(P) = \sqrt{c(P)}.\]

	\end{itemize}

To finish our preliminaries, let us introduce the core. Observe that it is always defined 
with respect to a specific satisfaction function $\satisfaction$.
	
	\begin{definition}[The Core of PB with Approval Ballots]
		Given an instance $I = \tuple{\projSet, c, b}$, a profile $\profile$ of approval ballots and a satisfaction function $\satisfaction$, a budget allocation $\pi \in \allocSet(I)$ is \emph{in the core[$\satisfaction$]} of $I$ for $\satisfaction$ if there is no group of voters $N \subseteq \agentSet$ and subset of projects $P \subseteq \projSet$ such that $\nicefrac{|N|}{n} \geq \nicefrac{c(P)}{b}$ and for all voter $i^\star \in N$ we have:
		\[\satisfaction_{i^\star}(\pi) < \satisfaction_{i^\star}(P).\]
If such a group of voters $N$ and set of projects $P$ does exist, we say that the voters in $N$
can deviate to $P$.
	\end{definition}

\section{Cost-satisfaction function}

First, let us show that the core can be empty for the cost satisfaction function.

\begin{theorem}\label{thm:1}
There are PB instances for which the core$[\costSatisfaction]$ is empty.
\end{theorem}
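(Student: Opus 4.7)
The plan is to exhibit an explicit PB instance $I = \langle \projSet, c, b\rangle$ together with an approval profile $\profile$ such that every feasible budget allocation $\pi$ admits a blocking coalition $(N,P)$. My first attempt would adapt the classical Condorcet-cycle construction used to show core emptiness with cardinal utilities: take a small number of voter groups of equal size, each approving a cyclically shifted subset of a small pool of projects, and calibrate the project costs and the budget so that a single group's proportional share $b/|N_i|$ just suffices to afford one cheap project while two groups' combined share suffices to afford a slightly more expensive one.

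Concretely, I would first reduce to exhaustive allocations, since a non-exhaustive $\pi$ with unspent budget is usually blocked trivially by a single voter group approving an unfunded affordable project. Exploiting the cyclic symmetry of the construction then reduces the case analysis to a handful of representative allocations. For each such representative $\pi$, I would identify a blocking pair by taking $N$ to be the union of the voter groups that fare worst under $\pi$, and $P$ to be a set of projects commonly approved by everyone in $N$. Verifying the affordability condition $c(P)/b \leq |N|/n$ and the strict-improvement condition $c(P \cap A_i) > c(\pi \cap A_i)$ for every $i \in N$ is then a routine arithmetic check, and the cyclic symmetry guarantees the analysis transfers to the remaining allocations.

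The main obstacle lies in the strict-improvement condition. Under $\costSatisfaction$, a voter's satisfaction is bounded by the total cost of their approved selected projects, and a deviating coalition's budget is exactly proportional to its size, so the natural inequalities tend to be equalities or just barely fail to be strict. A fully symmetric construction with all projects of equal cost is therefore unlikely to work on its own: the construction must introduce an asymmetry, either in project costs, in voter-group sizes, or in the overlap structure of approvals, precisely calibrated so that for every candidate $\pi$ some coalition of the critical size is underserved by a margin that its proportional share of the budget can strictly recover. I expect the numerical tuning of these parameters, rather than the high-level Condorcet-cycle intuition, to be the delicate step.
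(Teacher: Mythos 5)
Your proposal is a strategy outline, not a proof: the entire mathematical content of this theorem is a concrete instance together with a complete case analysis, and you supply neither. You correctly identify that the strict-improvement condition is the delicate point and that a fully symmetric, equal-cost construction will fail because the natural inequalities degenerate to equalities --- but you then defer exactly that calibration (``I expect the numerical tuning \dots to be the delicate step'') rather than carrying it out. Without explicit costs, an explicit budget, and a verification that \emph{every} feasible allocation is blocked, there is nothing to check. A reader cannot reconstruct the counterexample from what you have written, because the existence of numbers making all the required strict inequalities hold simultaneously is precisely what is in doubt.

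There is also a conceptual misdirection in your starting point. The paper explicitly observes that the known cardinal-utility counterexamples (Fain et al., Peters et al.) rely on preferences forming a Condorcet-type cycle, which \emph{cannot arise} from approval ballots --- this is exactly why the approval case remained open. The construction that works is not a cyclically shifted profile of equal-size groups over a uniform project pool. Instead it uses three voters and three \emph{tiers} of project costs: a joint project $p_{ij}$ for each pair of voters, a large personal project $\ell_i$ and a small personal project $s_i$ for each voter, with costs $8$, $5$, $2$ and budget $15$. The asymmetry you correctly sense is needed lives entirely in these cost tiers: the budget is tuned so that no two joint projects fit together, a lone voter's share ($5$) exactly buys her large personal project, and a pair's share ($10$) buys their joint project plus one small project --- which is what generates a strict improvement for whichever pair is shortchanged by any candidate allocation. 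Your proposed case split on exhaustiveness would also not organize the argument well; the paper's analysis instead splits on how many joint projects $\pi$ contains and then on which personal projects are forced out by the budget constraint.
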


\begin{proof}
Consider the following PB-instance with three voters $1$, $2$ and $3$,
nine projects in three groups, the joint projects $p_{12}$, $p_{13}$ and $p_{23}$,
the large personal projects $\ell_1$, $\ell_2$ and $\ell_3$ and the small personal projects
$s_1$, $s_2$ and $s_3$. The approvals are as follows: Voter $1$ approves
$p_{12}$, $p_{13}$, $\ell_1$ and $s_1$, voter $2$ approves $p_{12}$, $p_{23}$, $\ell_2$ and $s_2$
and voter $3$ approves $p_{13}$, $p_{23}$, $\ell_3$ and $s_3$. In other words, each voter approves 
all projects that have their name in the subscript.

The costs of the projects are as follows: $c(p_{12}) = c(p_{23}) = c(p_{13}) = 8$ while
$c(\ell_i) = 5$ and $c(s_i) = 2$, for all $i \in 1,2,3$.
For the budget, we have $b = 15$.

We claim that no feasible allocation $\pi$ can be in the core. First assume $\pi$
does not contain any of the projects $p_{12},p_{13},p_{23}$. Then for all voters $i$,
we have $\costSatisfaction_i(\pi) \leq 5 + 2 = 7$.
However, any set of two voters, say w.l.o.g.\ $1$ and $2$, 
deserves $\nicefrac{2}{3} b = 10$, hence they can deviate to
$\pi^* = \{p_{12}\}$. Then $\costSatisfaction_1(\pi^*) = 8 > 7 = \costSatisfaction_1(\pi)$
and the same for $2$.

It follows that $\pi$ contains at least one of the projects $p_{12}, p_{13}, p_{23}$.
Due to the symmetry of the instance, we can assume w.l.o.g.\ that $p_{12} \in \pi$.
Observe that $b = 15 < 16= c(p_{12}) + c(p_{13}) =  c(p_{12}) + c(p_{23})$. Hence $p_{13}$
and $p_{23}$ cannot be in $\pi$.
Next, assume that $\pi$ does not contain $\ell_3$. Then, as $\pi$ does not contain $\ell_3$, $p_{13}$
and $p_{23}$, we know that $\costSatisfaction_3(\pi) \leq 2$.
However, $\nicefrac{1}{3}b = 5$. Therefor, $3$ can deviate to $\pi^* = \{\ell_3\}$,
for which we have $\costSatisfaction_3(\pi^*) = 5$

It follows that we must have $p_{12} \in \pi$ and $\ell_3 \in \pi$.
As $b - c(p_{12}) - c(\ell_3) = 15 - 8 - 5 = 2$,
we can now only fit one more project from $s_1, s_2, s_3$. This means that either $s_1$ or $s_2$
is not in $\pi$. Again w.l.o.g.\ assume $s_1$ is not in $\pi$. It follows that
$\costSatisfaction_1(\pi) = 8$ and $\costSatisfaction_3(\pi) \leq 7$.
However, as before, $1$ and $3$ together deserve $\nicefrac{2}{3}b = 10$ units of money.
This means they can deviate to $\pi^* = \{p_{13}, s_1\}$. Then $\costSatisfaction_1(\pi^*) =
8+2 > \costSatisfaction_1(\pi)$ and $\costSatisfaction_3(\pi^*) = 8 > \costSatisfaction_3(\pi)$.
A contradiction.
\end{proof}

\section{A generalization to large class of satisfaction functions}

Let us know provide a more general result. For the proof to work, we need
some constraints on the satisfaction function.

\begin{theorem}\label{thm:2}
Let $\satisfaction$ be a satisfaction function such that
\begin{enumerate}
\item $P \subsetneq P'$ implies $\satisfaction(P) < \satisfaction(P')$,
\item $c(p) < c(q)$ implies $\satisfaction(p) < \satisfaction(q)$ for all $p,q \in \projSet$
\end{enumerate}
and such that there exist
$b$ and $\epsilon$ for which 
\begin{enumerate}
\setcounter{enumi}{2}
\item $\nicefrac{2b}{3}- \epsilon > \nicefrac{b}{2}$,
\item For any projects with $c(p_\epsilon) = \epsilon$, $c(p_{\nicefrac{b}{3}}) = \nicefrac{b}{3}$
and $c(p_{\nicefrac{2b}{3}- \epsilon}) = \nicefrac{2b}{3}- \epsilon$ we have
\[
\satisfaction(\{p_{\nicefrac{b}{3}}, p_{\epsilon}\})<
\satisfaction(p_{\nicefrac{2b}{3}- \epsilon}) 
\]
\end{enumerate}
Then, there are PB instances for which the core$[\satisfaction]$ is empty.
\end{theorem}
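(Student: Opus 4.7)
The plan is to generalize the construction from Theorem~\ref{thm:1} in the natural parameterized way: keep the same combinatorial structure of three voters and nine projects in three symmetric groups, and rescale the costs using the $b$ and $\epsilon$ whose existence is hypothesized. Specifically, I would take joint projects $p_{12}, p_{13}, p_{23}$ each of cost $\nicefrac{2b}{3} - \epsilon$, large personal projects $\ell_1, \ell_2, \ell_3$ each of cost $\nicefrac{b}{3}$, and small personal projects $s_1, s_2, s_3$ each of cost $\epsilon$, with voter $i$ approving exactly the projects having $i$ in the subscript. Condition~(3) is equivalent to $\epsilon < \nicefrac{b}{6}$, from which I would immediately derive the capacity facts driving the case analysis: no two joint projects fit together, and a joint project plus $\ell_i$ leaves at most $\epsilon$ budget, so at most one additional $s_j$ and no further $\ell_j$ can be added.

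The case analysis then mirrors the proof of Theorem~\ref{thm:1}. If $\pi$ contains no joint project, then for every voter $i$ we have $\satisfaction_i(\pi) \leq \satisfaction(\{\ell_i, s_i\})$ by inclusion-monotonicity, and any two voters $i, j$ can deviate to $\{p_{ij}\}$ (whose cost $\nicefrac{2b}{3} - \epsilon$ is below their joint share of $\nicefrac{2b}{3}$); condition~(4) then ensures both voters strictly improve. Otherwise, up to symmetry $p_{12} \in \pi$, and the capacity fact forces $p_{13}, p_{23} \notin \pi$. If $\ell_3 \notin \pi$, then voter $3$'s satisfaction is at most $\satisfaction(\{s_3\}) < \satisfaction(\{\ell_3\})$ by condition~(2), so voter $3$ alone deviates to $\{\ell_3\}$ (cost exactly matching their share of $\nicefrac{b}{3}$). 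So $\ell_3 \in \pi$, the remaining budget is at most $\epsilon$, and at least one of $s_1, s_2$ is missing; WLOG $s_1 \notin \pi$. Then $\satisfaction_1(\pi) = \satisfaction(\{p_{12}\})$ and $\satisfaction_3(\pi) \leq \satisfaction(\{\ell_3, s_3\})$, and voters $1$ and $3$ deviate to $\{p_{13}, s_1\}$ of cost exactly $\nicefrac{2b}{3}$. The strict improvement for voter $1$ comes from condition~(1) applied to $\{p_{13}\} \subsetneq \{p_{13}, s_1\}$, and for voter $3$ from condition~(4), giving the contradiction.

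I do not expect a serious obstacle: the hypotheses of the theorem are tailored precisely to supply the four comparisons needed in the case analysis. Conditions~(1) and~(2) deliver the ``obvious'' strict set- and cost-monotonicities, condition~(3) encodes the scarcity of joint projects (and the tightness of the $b - \epsilon$ residual), and condition~(4) is the tight comparison that prevents compensating a voter who loses access to a joint project by offering them a large plus a small personal project instead. The only minor care required is verifying that each deviation $P$ satisfies $c(P) \leq \nicefrac{|N|}{n} \cdot b$ (non-strict), which is immediate for all three deviations used above, and confirming via the capacity computation that no $\ell_j$ with $j \neq 3$ can sneak into $\pi$ alongside $p_{12}$ and $\ell_3$.
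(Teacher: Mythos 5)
Your proposal reconstructs exactly the paper's construction (the same three-voter, nine-project instance with costs $\nicefrac{2b}{3}-\epsilon$, $\nicefrac{b}{3}$, $\epsilon$) and runs the identical case analysis, invoking Conditions~1--4 at the same points the paper does. It is correct and essentially the same proof.
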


Conditions 1 and 2 are very natural monotonicity conditions.
Condition 1, which is a basic strict monotonicity notion,
is satisfied by basically any satisfaction function that is not based on some type of minmax
procedure like the Chamberlin-Courant satisfaction function.
Condition 2 ensures that more expensive projects provide higher satisfaction.
This is a more demanding property and is, for example, not satisfied
by the cardinality satisfaction function or the share. In the next section,
we will see that the proof idea shown this note also works for the share.
Unfortunately, this does not seem to be the case for the cardinality satisfaction
function.

Condition 3 and 4 are more technical. Condition 3 is there to ensure 
that $\epsilon$ is small enough in relation to $b$ and could be rewritten as
$\epsilon < \nicefrac{b}{6}$. However, the formulation above will be more useful in the proof.
Condition 4 is not a very natural requirement, but, if $b$ is large enough and
$\epsilon$ is small enough, it should hold for essentially any
natural satisfaction function that satisfies Condition 2.
Consider, for example, $b = 9999$ and $\epsilon =0.5$. Then $\satisfaction^{\sum\ln}$
satisfies Condition 4, as $\ln(1+3333) + \ln(1+0.5) \approx 8.5$, while
$\ln(1 + 6666 -1) \approx 8.8$. It is straightforward to check that the
same $b$ and $\epsilon$ also work for $\satisfaction^{\sum\sqrt{~}}$,
$\satisfaction^{\ln}$, $\satisfaction^{\sqrt{~}}$.

\begin{proof}
We again consider a PB-instance with three voters $1$, $2$ and $3$,
and nine projects in the same three groups, $p_{12}$, $p_{13}$ and $p_{23}$;
$\ell_1$, $\ell_2$ and $\ell_3$; $s_1$, $s_2$ and $s_3$.
The approvals are as before: Voter $1$ approves
$p_{12}$, $p_{13}$, $\ell_1$, and $s_1$, $2$ approves $p_{12}$, $p_{23}$, $\ell_2$ and $s_2$,
$3$ approves $p_{13}$, $p_{23}$, $\ell_3$ and $s_3$. In other words, each voter approves 
all projects that have their name in the subscript.

In contrast to the first proof, the costs of the projects are, however, changed:
$c(p_{12}) = c(p_{23}) = c(p_{13})
= \nicefrac{2b}{3}- \epsilon$ while $c(\ell_i) = \nicefrac{b}{3}$ and $c(s_i) = \epsilon$ ,
for all $i \in 1,2,3$. For the budget, we have $b$.
Now, we claim that basically the same argumentation as in the proof of Theorem~\ref{thm:1}
still works. Let us go through 
the steps one by one.

We claim that no feasible allocation $\pi$ can be in the core. 
Let us assume for the sake of a contradiction that $\pi$ is in the core.
First assume additionally that $\pi$
does not contain any of the projects $p_{12},p_{13},p_{23}$. Then for all voters $i$,
we have $A_i \cap \pi \subseteq \{\ell_i, s_i\}$.
However, any set of two voters, say w.l.o.g.\ $1$ and $2$, 
deserves $\nicefrac{2}{3} b$, hence they can deviate to
$\pi^* = \{p_{12}\}$. However, then $\satisfaction_1(\pi^*) =
\satisfaction(p_{12})$, while, by Condition 1,
$\satisfaction_1(\pi) \leq \satisfaction(\{\ell_1,s_1\})$,
which implies by Condition 4 that $\satisfaction_1(\pi^*) > \satisfaction_1(\pi)$.
The same holds for $2$. This contradicts the assumption that $\pi$ is in the core.

It follows that $\pi$ contains at least one of the projects $p_{12}, p_{13}, p_{23}$.
Due to the symmetry of the instance, we can assume w.l.o.g.\ that $p_{12} \in \pi$.
Recall that $b < 2c(p_{12})$ by Condition 3. Hence $p_{13}$ and $p_{23}$ cannot be in $\pi$.
Next, assume that $\pi$ does not contain $\ell_3$. Then, as $\pi$ does not contain $\ell_3$, $p_{13}$
and $p_{23}$, we know that $A_3 \cap \pi \subseteq \{s_i\}$.
However, $3$ deserve $\nicefrac{1}{3}b$ units of money and $c(\ell_i) = \nicefrac{b}{3}$.
Therefor, $3$ can deviate to $\pi^* = \{\ell_3\}$,
Now, as $c(\ell_3) > c(s_3)$, we have $\satisfaction_3(\pi^*) > \satisfaction_3(\{s_3\}) \geq 
\satisfaction_3(\pi)$ by Condition 2. This again contradicts the assumption that $\pi$ is in the core.

It follows that we must have $p_{12} \in \pi$ and $\ell_3 \in \pi$.
As $b - c(p_{12}) - c(\ell_3) = b - (\nicefrac{2}{3}b - \epsilon) - \nicefrac{b}{3} = \epsilon$,
we can now only fit one more project from $s_1, s_2, s_3$. This means that either $s_1$ or $s_2$
is not in $\pi$. Again w.l.o.g.\ assume $s_1$ is not in $\pi$. It follows that
$A_1 \cup \pi = p_{12}$ and $A_3 \cap \pi \subseteq \{\ell_3, s_3\}$.
However, $1$ and $3$ together deserve $\nicefrac{2}{3}b$ units of money.
This means they can deviate to $\pi^* = \{p_{13}, s_1\}$, as $c(\pi^*) = \nicefrac{2}{3}b$.
Then $\satisfaction_1(\pi^*) = \satisfaction(\{p_{13}, s_1\}) > \satisfaction(\{p_{13}\}) =
\satisfaction_1(\pi)$ by Condition 1. Moreover, we have 
$\satisfaction_3(\pi^*) = \satisfaction(p_{13})$
which is by Condition 4 larger than $\satisfaction(\{\ell_3,s_3\}) \geq \satisfaction_3(\pi)$.
A contradiction.
\end{proof}

\section{The share}

As mentioned above, the share does not satisfy the conditions of Theorem~\ref{thm:2}.
However, the same proof still works with slightly different costs and a slight variation
on the argumentation. In the following, we write $\share(\pi)$ instead of
$\satisfaction^{\share}(\pi)$ to improve the readability.

\begin{theorem}\label{thm:3}
There are PB instances for which the core$[\share]$ is empty.
\end{theorem}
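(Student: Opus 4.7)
My plan is to keep the 3-voter, 9-project instance of Theorem~\ref{thm:2} unchanged, but re-engineer the costs to work with the share. The key tension is that each voter's share from a joint project $p_{ij}$ is only $c(p_{ij})/2$, divided among the two approvers, so the cost of a joint project must be roughly doubled relative to the personal projects for the Step~1 inequality to survive. A concrete workable family of values is $c(p_{ij}) = \nicefrac{7b}{12}$, $c(\ell_i) = \nicefrac{3b}{16}$, $c(s_i) = \nicefrac{b}{12}$, with $b$ chosen large enough to admit integer arithmetic (e.g.\ $b=48$).

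Before running the argument, I would check the following inequalities, which play the role of Conditions~1--4 of Theorem~\ref{thm:2} adapted to $\share$: (i)~$2c(p_{ij}) > b$, so at most one joint fits in $\pi$; (ii)~$c(p_{ij}) + c(s_i) \leq \nicefrac{2b}{3}$ (in fact equality), so a two-voter coalition can exactly afford $\{p_{ij}, s_k\}$; (iii)~$c(p_{ij})/2 > c(\ell_i) + c(s_i)$, so a single approved joint gives strictly more share than both personal projects combined; (iv)~$c(\ell_i) + c(s_i) \leq \nicefrac{b}{3}$, so voter~$3$ alone can afford $\{\ell_3, s_3\}$; and (v)~both $c(p_{ij}) + c(\ell_i) + 3c(s_i) > b$ and $c(p_{ij}) + 2c(\ell_i) + c(s_i) > b$, so once $\{p_{12}, \ell_3, s_3\} \subseteq \pi$ no further $\ell_i$ and no second $s_j$ from $\{s_1, s_2\}$ can fit. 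With the values above these are $56 > 48$, $32 \leq 32$, $14/48 > 13/48$, $13/48 \leq 16/48$, $49>48$ and $50>48$.

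The proof itself then mirrors Theorem~\ref{thm:2} step by step. If $\pi$ contains no joint, each voter's share is at most $c(\ell_i) + c(s_i)$; by (ii) and (iii), $\{1,2\}$ deviates to $\{p_{12}\}$ for strict share $c(p_{12})/2$. So some joint is in $\pi$, WLOG $p_{12}$, and by (i) no other joint is. Since voter~$3$ only approves $\ell_3, s_3$ among the feasible remaining projects and by (iv) can achieve $c(\ell_3) + c(s_3)$ by a singleton deviation, the core condition forces $\{\ell_3, s_3\} \subseteq \pi$. The budget inequalities in (v) leave at most one of $s_1, s_2$ in $\pi$ and exclude $\ell_1, \ell_2$ altogether; WLOG $s_1 \notin \pi$ (else swap to the symmetric coalition $\{2,3\}$ with deviation $\{p_{23}, s_2\}$). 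Because $\ell_1 \notin \pi$ too, voter~$1$'s current share is exactly $c(p_{12})/2$, so the coalition $\{1,3\}$ deviates to $\{p_{13}, s_1\}$ at cost $\nicefrac{2b}{3}$: voter~$1$ improves from $c(p_{12})/2$ to $c(p_{12})/2 + c(s_1)$, and voter~$3$ from $c(\ell_3) + c(s_3)$ to $c(p_{13})/2$, both strictly by (iii).

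The main obstacle will be the parameter engineering. Because the share halves the value of a joint project, one cannot push $c(p_{ij})$ all the way up to $\nicefrac{2b}{3}$ as in Theorem~\ref{thm:2}: the Step~4 deviation now requires $c(p_{ij}) + c(s_i) \leq \nicefrac{2b}{3}$, yet we still need $2c(p_{ij}) > b$, $c(p_{ij})/2 > c(\ell_i) + c(s_i)$, and enough budget slack to force exactly one of $s_1, s_2$ out of $\pi$ while simultaneously excluding every $\ell_i$ other than $\ell_3$. All of these constraints are coupled, and the only real work in the proof of Theorem~\ref{thm:3} is exhibiting numerical values that satisfy them all at once; the argumentation itself is then essentially identical to that of Theorem~\ref{thm:2}.
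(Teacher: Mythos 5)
Your proposal is correct and follows essentially the same route as the paper: the identical $3$-voter, $9$-project instance with the same four-step case analysis, only with different cost parameters (the paper uses $c(p_{ij})=11$, $c(\ell_i)=3$, $c(s_i)=2$, $b=21$). The only cosmetic difference is in the last step, where your parameters exclude $\ell_1,\ell_2$ entirely and the final deviation is $\{p_{13},s_1\}$, whereas the paper leaves room for one of $\ell_1,\ell_2$ and deviates to $\{p_{13},\ell_1\}$; both variants check out numerically.
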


\begin{proof}
The voters, projects and approvals are the same as in Theorem~\ref{thm:1} and~\ref{thm:2},
i.e., there are three voters $1$, $2$ and $3$,
nine projects in three groups, the joint projects $p_{12}$, $p_{13}$ and $p_{23}$,
the large personal projects $\ell_1$, $\ell_2$ and $\ell_3$ and the small personal projects
$s_1$, $s_2$ and $s_3$. The approvals are as follows: Voter $1$ approves
$p_{12}$, $p_{13}$, $\ell_1$ and $s_1$, voter $2$ approves $p_{12}$, $p_{23}$, $\ell_2$ and $s_2$
and voter $3$ approves $p_{13}$, $p_{23}$, $\ell_3$ and $s_3$. In other words, each voter approves 
all projects that have their name in the subscript.

The costs of the projects is slightly different than before:
$c(p_{12}) = c(p_{23}) = c(p_{13}) = 11$ while
$c(\ell_i) = 3$ and $c(s_i) = 2$, for all $i \in 1,2,3$.
For the budget, we have $b = 21$.

As before, we claim that no feasible allocation $\pi$ can be in the core. First assume $\pi$
does not contain any of the projects $p_{12},p_{13},p_{23}$. Then for all voters $i$,
we have $\share_i(\pi) \leq \share_i(\{\ell_i,s_i\} = 5$.
However, any set of two voters, say w.l.o.g.\ $1$ and $2$, 
deserves $\nicefrac{2}{3} b = 14$, hence they can deviate to
$\pi^* = \{p_{12}\}$. Then $\share_1(\pi^*) = \nicefrac{11}{2} = 5.5 > 5 = \share_1(\pi)$
and the same for $2$.

It follows that $\pi$ contains at least one of the projects $p_{12}, p_{13}, p_{23}$.
Due to the symmetry of the instance, we can assume w.l.o.g.\ that $p_{12} \in \pi$.
Observe that $b = 21 < 22= c(p_{12}) + c(p_{13}) =  c(p_{12}) + c(p_{23})$. Hence, $p_{13}$
and $p_{23}$ cannot be in $\pi$.
Next, assume that $\pi$ contains at most one of $\ell_3$ or $s_3$.
Then, as $\pi$ also does not contain $p_{13}$ and $p_{23}$,
we know that $\share_3(\pi) \leq \share_3(\ell_3) = 3$.
However, $\nicefrac{1}{3}b = 7$. Therefor, $3$ can deviate to $\pi^* = \{\ell_3, s_3\}$,
for which we have $\share_3(\pi^*) = 5$

It follows that we must have $p_{12} \in \pi$ and both $\ell_3$ and $s_3$ in $\pi$.
As $b - c(p_{12}) - c(\ell_3) - c(s_3)= 21 - 11 - 3 -2 = 5$,
we can now only fit either $\ell_1$ or  $\ell_2$ in $\pi$, but not both.
Again w.l.o.g.\ assume $\ell_1$ is not in $\pi$. It follows that
$\share_1(\pi) \leq \share_1(\{p_{12},s_1\}) = 7.5$ and
$\share_3(\pi) \leq \share_3(\{\ell_3,s_3\})$.
However, as before, $1$ and $3$ together deserve $\nicefrac{2}{3}b = 14$ units of money.
This means they can deviate to $\pi^* = \{p_{13}, \ell_1\}$. Then $\share_1(\pi^*) =
8.5 > \share_1(\pi)$ and $\share_3(\pi^*) = 5.5 > \share_3(\pi)$.
A contradiction.
\end{proof}

\subsection*{Acknowledgments}

First of all, I would like to thank Adrian Haret, Sophie Klumper, Guido Schäfer and 
Simon Rey. The work and discussions with them inspired the example underlying this note.
Secondly, I would like to thank Julian Chingoma and, again, Simon Rey, who helped me check the 
correctness of Theorem 1 and Theorem 3. If there are nevertheless any mistakes in these
proofs, they are all my fault. Finally, I would like to thank the Austrian Science Fund (FWF),
which funded this research under grant number J4581.

	\bibliographystyle{ACM-Reference-Format}
	\bibliography{PB}
	
\end{document}